\documentclass{llncs}
\usepackage{amsmath}
\usepackage{tikz}
\usepackage{enumerate}
\usetikzlibrary{arrows}
\usepackage[linesnumbered]{algorithm2e}
\tikzstyle{vertex}=[circle, draw, inner sep=0pt, fill=black, minimum size=5pt]
\newcommand{\vertex}{\node[vertex]}
\begin{document}

\title{Minimum Eccentricity Shortest Path Problem:\\ an Approximation Algorithm and\\ Relation with the k-Laminarity Problem
}

\author{\'Etienne Birmel\'e\inst{1} \and Fabien de Montgolfier\inst{2}
 \and L\'eo Planche\inst{1,2}}
\authorrunning{Birmel\'e, de Montgolfier, Planche} 

\institute{MAP5, UMR CNRS 8145, Univ. Sorbonne Paris Cit\'e\\
\email{etienne.birmele@parisdescartes.fr}
\and
IRIF, UMR CRNS 8243, Univ. Sorbonne Paris Cit\'e
  \email{fm@liafa.univ-paris-diderot.fr} \email{leo\_planche@liafa.univ-paris-diderot.fr}
  }
  
\maketitle           

\begin{abstract}

The Minimum Eccentricity Shortest Path (MESP) Problem consists in determining a shortest path (a path whose length is the distance between its extremities) of minimum eccentricity in a graph. It was introduced by Dragan and Leitert~\cite{Dragan2015} who described a linear-time algorithm which is an $8$-approximation of the problem. In this paper, we study deeper the double-BFS procedure used in that algorithm and extend it to obtain a linear-time $3$-approximation algorithm. We moreover study the link between the MESP problem and the notion of laminarity, introduced by V\"olkel \emph{et al} \cite{Volkel2016}, corresponding to its restriction to a diameter (\emph{i.e.} a shortest path of maximum length), and show tight bounds between MESP and laminarity parameters.

\keywords{Graph search, Graph theory, Eccentricity, Diameter, BFS, Approximation Algorithms, $k$-Laminar Graph}
\end{abstract}

\section{Introduction}

For both graph classification purposes and applications, it is an important issue to determine to which extent a graph can be summarized by a path. Different path constructions and metrics to characterize how far the graph is from the constructed path can be used, for example path-decompositions and path-width \cite{Robertson83} or path-distance-decompositions and  path-distance-width \cite{Yamazaki1997}. Another approach, on which we focus in this article, is to characterize the graph by a spine defined by one of its paths. 

\bigskip

This problem was first studied in terms of domination, that is finding a path such that every vertex in the graph belongs to or has a neighbor in the path. Several graphs classes were defined in terms of dominating paths. \cite{Deogun1995} studies the graphs for which the dominating path is a diameter. \cite{Deogun2002} introduces dominating pairs, that is vertices such that every path linking them is dominating. Graphs such that short dominating paths are present in all induced subgraphs are characterized in \cite{Bacso2007}. Linear-time algorithms to find dominating paths or dominating vertex pairs were also developed for AT-free graphs \cite{Corneil95,Corneil99}.

\bigskip

Dominating paths  do not exist however in every graph and have no associated metric 
to measure the distance from the graph to the path. A natural extension of the notion of domination is the notion of $k$-coverage for a given integer $k$, defined by the fact that a path $k$-covers the graphs if every vertex is at distance at most $k$ from the path. The smallest $k$ such that a path $k$-covers the graph is then a metric as desired.  

\bigskip

In the present paper, we study the latter problem in which the covering path is required to be a shortest path between its end-vertices. It was introduced in \cite{Dragan2015} as the Minimum Eccentricity Shortest Path Problem, and shown to be linked to the minimum line distortion problem \cite{yan2007graph}.

\bigskip

The MESP problem is also closely related to the notion of $k$-laminar graphs introduced in \cite{Volkel2016}, in which the covering path is required to be a diameter. 

\bigskip

The MESP problem, as well as determining if a graph is $k$-laminar for a given $k$, are NP-hard \cite{Dragan2015,Volkel2016}. 
However, Dragan and Leitert \cite{Dragan2015} develop a 2-approximation algorithm for MESP of time complexity $\mathcal{O}(n^3)$, a 3-approximation algorithm in $\mathcal{O}(nm)$ and a linear 8-approximation. The latter is extremely simple as it consists in a double-BFS procedure. 

\subsection*{Roadmap}
In this paper, we introduce a different analysis of the double-BFS procedure and prove that it is in fact a 5-approximation algorithm, and that the bound is tight. We then develop the idea of this algorithm and reach a 3-approximation, which still runs in linear time. Finally, we establish bounds relating the MESP problem and the notion of laminarity.

\subsection*{Definitions and Notations}

Through this paper $G=(V,E)$ denotes a finite connected undirected graph.
A \emph{shortest path} between two vertices $u$ and $v$ is a path whose length is minimal among all $u,v$-paths.
This length (counting edges) is the \emph{distance} $d(u,v)$. 
Depending on the context, we consider a path either as a sequence, or as a set of vertices.
The \emph{distance} $d(v,S)$ between a vertex $v$ and a set $S$ is smallest distance between $v$ and a vertex from $S$.

The \emph{eccentricity} $ecc(S)$ of a set $S$ is the largest distance between $S$ and any vertex of $G$.

 The maximal eccentricity of any singleton $\{v\}$, or equivalently the largest distance between two vertices, denoted here $diam(G)$, is often called the diameter of the graph, but for clarity in this paper \emph{a diameter} is always a shortest path of maximum length, \emph{i.e.} a shortest path of length $diam(G)$, and not its length.

\pagebreak

\section{Double-BFS is a $5$-Approximation Algorithm}\label{section5k}

Let us define the problem we are interested in:

\begin{definition}[Minimum Eccentricity Shortest Path Problem (MESP)]
Given a graph $G$, find a shortest path $P$ such that, for every shortest path $Q$, $ecc(P)\leq ecc(Q)$.

\bigskip

$k(G)$ denotes the eccentricity of a MESP of $G$.
\end{definition}

\begin{theorem}[Dragan and Leitert \cite{Dragan2015}]
Computing $k(G)$ or finding a MESP are NP-complete problems.
\end{theorem}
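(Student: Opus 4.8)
The plan is to treat MESP as a decision problem and establish both membership in NP and NP-hardness.

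\smallskip
\noindent\emph{Decision version and membership in NP.} The relevant decision problem is: given $G$ and an integer $k$, does $k(G)\le k$, i.e.\ is there a shortest path $P$ with $ecc(P)\le k$? A certificate is the vertex sequence of a candidate path $P$, which has length at most $n-1$ and hence polynomial size. One verifies in polynomial time that $P$ is a genuine path (consecutive vertices adjacent, no repetition), that it is shortest (run a BFS from one endpoint and check that its length equals the distance to the other endpoint), and that $ecc(P)\le k$ (a multi-source BFS from the set $P$). So the problem is in NP, and computing $k(G)$ or exhibiting a MESP is correspondingly NP-hard once hardness of the decision version is shown.

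\smallskip
\noindent\emph{NP-hardness by reduction from 3-SAT.} Given a formula $\varphi$ with variables $x_1,\dots,x_n$ and clauses $C_1,\dots,C_m$, I would build a graph $G_\varphi$ with a ``selection chain'': for each $x_i$ a gadget made of two internally disjoint paths of equal length between ports $a_i$ and $a_{i+1}$, one branch labelled $x_i$ and the other $\neg x_i$, the gadgets glued in series ($a_{i+1}$ of gadget $i$ identified with $a_i$ of gadget $i+1$). Since the two branches have the same length, every shortest $a_1,a_{n+1}$-path uses exactly one branch per gadget, and such paths biject with truth assignments of $\varphi$. For each clause $C_j$ add a vertex $c_j$ joined by connector paths of a fixed length $\ell$ to the three literal-branches occurring in $C_j$, arranged so that $d(c_j,P)\le\ell$ exactly when the assignment coded by $P$ satisfies $C_j$. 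Finally, attach pendant paths of suitable length to $a_1$ and $a_{n+1}$ so that any shortest path of small eccentricity must run from one pendant tip to the other through the whole chain, and attach dummy pendant vertices where needed so that every vertex other than the $c_j$'s lies within distance $\ell$ of any such path regardless of the assignment. Setting $k=\ell$, the graph $G_\varphi$ (of polynomial size, computable in polynomial time) satisfies $k(G_\varphi)\le k$ if and only if $\varphi$ is satisfiable.

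\smallskip
\noindent\emph{Main obstacle.} The delicate point is the interplay of the two constraints ``$P$ is a path'' and ``$P$ is a shortest path between its ends''. The gadget lengths and connector lengths must be chosen so that: (i) no unintended shortcut yields a shorter $a_1,a_{n+1}$-path that bypasses the selection structure; (ii) the distance from each $c_j$ to \emph{both} branches of every unrelated variable gadget stays $\le k$, so that only the clause vertices can ever fail to be covered; and (iii) the pendant lengths genuinely force the endpoints of any eccentricity-$k$ shortest path. Making all these numerical requirements simultaneously satisfiable, and then proving both directions of the equivalence, is the technical heart of the argument; the remaining bookkeeping is routine.
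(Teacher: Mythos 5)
This statement is not proved in the paper at all: it is quoted from Dragan and Leitert \cite{Dragan2015}, so there is no in-paper argument to compare yours against. Judged on its own, your NP-membership argument is fine (polynomial certificate, BFS checks for ``path'', ``shortest'', and ``eccentricity $\le k$''), and your overall plan---a reduction from 3-SAT with serial variable gadgets consisting of two equal-length parallel branches, clause vertices hung off the literal branches, and pendant paths forcing the endpoints---is the right template and is in the same spirit as the actual reduction in \cite{Dragan2015}.

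However, as a proof it has a genuine gap, and you name it yourself: all the numerical choices (branch lengths, connector length $\ell$, pendant lengths) and the verification of both directions of the equivalence are deferred as ``the technical heart of the argument.'' That heart is precisely where such reductions live or die: the clause connectors create cross-edges between gadgets that can shortcut the chain (your obstacle (i)), and resolving that by lengthening the connectors fights against keeping $d(c_j,P)\le\ell$ for satisfied clauses. Until concrete lengths are fixed and both implications checked, there is no proof. There is also a slip in your requirement (ii): as written, you ask that each $c_j$ be within distance $k$ of \emph{both} branches of every unrelated variable gadget, but that would make every clause vertex covered by any shortest path through the chain, rendering the reduction vacuous; what you need is that every \emph{non-clause} vertex (unchosen branches, connector interiors, pendants) is within distance $k$ of any candidate path, while each $c_j$ is within distance $k$ only of the branches carrying its own literals. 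So the proposal is a reasonable roadmap, not a proof; the correct citation-level resolution is simply to refer to \cite{Dragan2015}, where the reduction is carried out in full.
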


It is therefore worth using polynomial-time approximation algorithms. We say that an algorithm is an $\alpha$-approximation of the MESP if every path output by this algorithm is a shortest path of eccentricity at most $\alpha k(G)$.

\bigskip

Double-BFS is a widely used tool for approximating $diam(G)$ \cite{draganBFS}. It simply consists in the following procedure:
\begin{enumerate}
\item Pick an arbitrary vertex $r$
\item Perform a BFS (Breadth-First Search) starting at $r$ and ending at $x$. $x$ is thus one of the furthest vertices from $r$.
\item Perform a BFS (Breadth-First Search) starting at $x$ and ending at $y$.
\end{enumerate}
The output of the algorithm is the path from $x$ to $y$, called a \emph{spread path}, while its extremities $(x,y)$ are called a \emph{spread pair}. 
A folklore result is that the distance between $x$ and $y$  2-approximates the diameter of $G$. As noted by Dragan and Leitert, Double-BFS may also be used for approximating MESP: they have shown in~\cite{Dragan2015} that any spread path is an 8-approximation of the MESP problem.

\bigskip

The first result of the present paper is that any spread path is in fact a $5$-approximation of the MESP problem and that the bound is tight. But before we prove this result (Theorem~\ref{lemma2}), let us give the key lemma used for proving our three theorems:

\begin{lemma}\label{lemma1}
Let G be a graph having a shortest path $v_0,v_1...v_t$  of eccentricity k.

Let P=$x_0,x_1,...x_s$ be a shortest path of G.
    
Let $i_{min}^P$ (resp. $i_{max}^P$) be the smallest (resp. largest) integer such that $v_{i_{min}^P}$ (resp. $v_{i_{max}^P}$) is at distance at most k of P.

\bigskip

For every integer $i$ such that $i_{min}^P \leq i\leq i_{max}^P$, $v_i$ is then at distance at most $2k$ from $P$.

\bigskip

Subsequently, every vertex $v$ of $G$ at distance at most $k$ from the subpath between $v_{i_{min}^P}$ and $v_{i_{max}^P}$ is at distance at most $3k$ of $P$.
\end{lemma}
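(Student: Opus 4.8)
The plan is to derive the ``subsequently'' statement as an immediate corollary of the first assertion, so that the real content is the bound $d(v_i,P)\le 2k$ for $i_{min}^P\le i\le i_{max}^P$. Indeed, if $v$ lies within distance $k$ of some $v_i$ with $i_{min}^P\le i\le i_{max}^P$ and we already know $d(v_i,P)\le 2k$, then $d(v,P)\le d(v,v_i)+d(v_i,P)\le k+2k=3k$. Write $a=i_{min}^P$ and $b=i_{max}^P$; by definition $d(v_a,P)\le k$ and $d(v_b,P)\le k$. I will repeatedly use two facts: first, that $v_0,\dots,v_t$ being a shortest path gives $d(v_i,v_j)=|i-j|$; second, that $v_0,\dots,v_t$ has eccentricity $k$, so every vertex of $G$ --- in particular every vertex of $P$ --- lies within distance $k$ of some $v_c$, and then $v_c$ is within $k$ of $P$, so by extremality of $a$ and $b$ we must have $a\le c\le b$.

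First I would handle the case where $i$ is near an endpoint: if $i-a\le k$ then $d(v_i,P)\le d(v_i,v_a)+d(v_a,P)\le k+k=2k$, and symmetrically if $b-i\le k$. So assume $i-a\ge k+1$ and $b-i\ge k+1$ (in particular $b-a\ge 2k+2$). Pick vertices $x_\alpha,x_\beta$ of $P$ realizing $d(v_a,P)$ and $d(v_b,P)$, and consider the subpath $Q$ of $P$ joining them, written $y_0=x_\alpha,y_1,\dots,y_m=x_\beta$ with $d(y_0,y_l)=l$ (a subpath of a shortest path is a shortest path). Set $g(l)=d(v_a,y_l)$. Consecutive values of $g$ differ by at most $1$; moreover $g(0)=d(v_a,P)\le k<i-a$, whereas $g(m)=d(v_a,x_\beta)\ge d(v_a,v_b)-d(v_b,x_\beta)\ge (b-a)-k=(b-i)+(i-a)-k\ge (i-a)+1>i-a$. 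A discrete intermediate value argument then yields an index $l^\ast$ with $g(l^\ast)=i-a$, i.e.\ a vertex $y^\ast:=y_{l^\ast}$ of $P$ with $d(v_a,y^\ast)=i-a=d(v_a,v_i)$.

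Finally I would show $y^\ast$ is within $2k$ of $v_i$. By eccentricity of $v_0,\dots,v_t$ there is an index $c$ with $d(y^\ast,v_c)\le k$, and since $d(v_c,P)\le d(v_c,y^\ast)\le k$ we get $a\le c\le b$, hence $d(v_a,v_c)=c-a$. Two triangle inequalities give $c-a\ge d(v_a,y^\ast)-d(y^\ast,v_c)\ge (i-a)-k$ and $c-a\le d(v_a,y^\ast)+d(y^\ast,v_c)\le (i-a)+k$, so $|c-i|\le k$ and therefore $d(v_i,P)\le d(v_i,v_c)+d(v_c,y^\ast)\le k+k=2k$, which proves the first assertion and hence, by the opening remark, the whole lemma. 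The step I expect to be most delicate is getting the endpoint case split exactly right so that the intermediate value argument applies --- this is precisely where the hypothesis that $i$ lies between $i_{min}^P$ and $i_{max}^P$ does its work --- together with the bookkeeping that forces the auxiliary index $c$ to lie in $[a,b]$.
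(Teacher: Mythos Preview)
Your argument is correct and takes a genuinely different route from the paper. The paper proves the bound $d(v_i,P)\le 2k$ by induction on the length of $P$: setting $P_l=x_0,\dots,x_l$, it shows that the interval $[i_{min}^{P_l},i_{max}^{P_l}]$ grows by at most $2k+1$ when one appends $x_l$, so any new $v_i$ entering this interval is within $k$ of one of its endpoints, hence within $2k$ of $P_l$. Your proof is direct: you locate a specific vertex $y^\ast$ on $P$ at the ``right'' distance $i-a$ from $v_a$ via a discrete intermediate-value argument along the subpath of $P$ between closest points to $v_a$ and $v_b$, and then use the eccentricity hypothesis to pull $y^\ast$ back to some $v_c$ with $|c-i|\le k$.

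Both arguments exploit the same combinatorial facts (that $P$ is a shortest path so distances along it are additive, that $v_0,\dots,v_t$ has eccentricity $k$, and the extremality of $a$ and $b$), but they deploy them differently. The paper's induction is more mechanical and, as a byproduct, yields the stronger statement that the $2k$ bound already holds for every prefix $P_l$; this incremental viewpoint is what the paper later leans on when analysing how $[i_{min},i_{max}]$ evolves across recursive calls of Algorithm3k. Your approach is shorter and gives a concrete witness $y^\ast\in P$ close to $v_i$, at the price of the case split on whether $i$ is near an endpoint and the bookkeeping forcing $c\in[a,b]$ --- both of which you handle correctly.
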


One may think, at first glance, that this lemma looks similar to the following:
\begin{lemma}[from Dragan \emph{et al.} \cite{Dragan2015}]\label{lemmaDragan}
If $G$ has a shortest path of eccentricity at most $k$ from $s$ to $t$, then
every path $Q$ with $s$ in $Q$ and $d(s,t) \leq max_{v\in Q} d(s,v)$ has eccentricity at
most $3k$.
\end{lemma}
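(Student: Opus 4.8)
The plan is to reduce the statement to a single geometric fact about the eccentricity-$k$ shortest path. Write $\pi = u_0, u_1, \dots, u_\ell$ for the given shortest path from $s = u_0$ to $t = u_\ell$, so that $\ell = d(s,t)$, $d(u_i,u_j) = |i-j|$, and every vertex of $G$ lies within $k$ of $\pi$. Since any $v \in V$ has a projection $u_j$ with $d(v,u_j) \le k$, the triangle inequality gives $d(v,Q) \le k + d(u_j, Q)$; hence it suffices to prove the \textbf{key claim} that \emph{every} vertex $u_j$ of $\pi$ satisfies $d(u_j, Q) \le 2k$. This immediately yields $ecc(Q) \le 3k$.

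To exploit the hypothesis $d(s,t) \le \max_{v \in Q} d(s,v)$, I would first pick a vertex $w \in Q$ attaining this maximum, so $d(s,w) \ge \ell$. Projecting $w$ onto $\pi$ gives an index $m$ with $d(w, u_m) \le k$, and then $\ell \le d(s,w) \le d(s,u_m) + d(u_m,w) \le m + k$, i.e.\ $m \ge \ell - k$. This single inequality is what forces $Q$ to ``reach'' almost to the end of $\pi$, and it splits the key claim into two ranges. For the far range $m \le j \le \ell$ the bound is direct: $d(u_j, Q) \le d(u_j, w) \le (j-m) + d(u_m, w) \le (\ell - m) + k \le 2k$, using $w \in Q$ and $\ell - m \le k$.

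The substance is the near range $0 \le j \le m$. Here I would walk along the subpath $Q' = q_0, q_1, \dots, q_p$ of $Q$ from $q_0 = s$ to $q_p = w$, choosing for each $q_a$ a projection index $\phi(a)$ with $d(q_a, u_{\phi(a)}) \le k$ and $\phi(0) = 0$, $\phi(p) = m$. Since consecutive $q_a, q_{a+1}$ are adjacent, the triangle inequality gives $|\phi(a+1) - \phi(a)| = d(u_{\phi(a)}, u_{\phi(a+1)}) \le 2k+1$. Thus $\phi(0), \dots, \phi(p)$ is an integer sequence starting at $0$, ending at $m \ge j$, with jumps at most $2k+1$; a discrete intermediate-value argument produces consecutive indices $a-1, a$ with $\phi(a-1) < j \le \phi(a)$, so that $u_j$ lies between $u_{\phi(a-1)}$ and $u_{\phi(a)}$ along $\pi$.

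The one delicate point --- and the step I expect to be the main obstacle --- is extracting the constant $2k$ rather than a weaker bound from this crossing. Because the index-gap $\phi(a) - \phi(a-1) \le 2k+1$ and $u_j$ sits between the two endpoints, the nearer endpoint is within $\lfloor (2k+1)/2 \rfloor = k$ of $u_j$; taking the corresponding vertex $q_{a-1}$ or $q_a \in Q$ then gives $d(u_j, Q) \le k + k = 2k$. This parity rounding is exactly what keeps the final eccentricity at $3k$ instead of degrading to $4k$. The overall scheme --- projecting a second path onto the eccentricity-$k$ shortest path and controlling the index jumps of consecutive projections --- is the same mechanism underlying Lemma~\ref{lemma1}, adapted here to a path $Q$ that need not be shortest and whose reach along $\pi$ is guaranteed by the hypothesis on $\max_{v \in Q} d(s,v)$.
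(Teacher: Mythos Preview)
The paper does not prove Lemma~\ref{lemmaDragan}; it is quoted from Dragan \emph{et al.}~\cite{Dragan2015} solely to contrast it with the paper's own Lemma~\ref{lemma1}, so there is no in-paper proof to compare against.

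That said, your argument is correct and self-contained. The reduction to the key claim $d(u_j,Q)\le 2k$ is the right move; the far-range case $m\le j\le\ell$ is immediate from $m\ge\ell-k$; and for the near range your discrete intermediate-value step is sound once you take $a$ to be the \emph{first} index with $\phi(a)\ge j$ (this is what guarantees $\phi(a-1)<j\le\phi(a)$ despite $\phi$ not being monotone). The parity observation that one of the two gap-endpoints is within $k$ of $u_j$ is exactly what pins the constant at $2k$, and hence the eccentricity at $3k$. Your closing remark is accurate: the projection-and-jump-control mechanism is precisely the engine of the paper's proof of Lemma~\ref{lemma1} (the induction on $P_l$ there tracks the same $2k{+}1$ bound on consecutive projection indices), so methodologically your proof is the natural adaptation of that lemma's argument to a not-necessarily-shortest path $Q$ whose reach along $\pi$ is forced by the hypothesis on $\max_{v\in Q} d(s,v)$ rather than by endpoint projections.
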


The difference lies in the fact that the $k$ in Lemma~\ref{lemmaDragan} is specific to the given couple of vertices $(s,t)$ while the $k$ in Lemma~\ref{lemma1} is global. 
On the other hand, Lemma~\ref{lemmaDragan} gives a bound on the eccentricity of a path with respect to the whole graph, while Lemma~\ref{lemma1} only guarantees an eccentricity for a defined subgraph.

\begin{proof}[of Lemma~\ref{lemma1}]
The second assertion of the lemma is straightforward given the first one. To prove the latter, we define, for all $l$ between $0$ and $s$, the subpath  $P_l=x_0,x_1...x_l$.

\bigskip

Let us show by induction on $l$ that for all $i$ between $i_{min}^{P_l}$ and $i_{max}^{P_l}$, $v_i$ is at distance at most $2k$ of $P_l$.\\

$\bullet$ $l=0$, $P_0=x_0$.

Using the triangle inequality:
\begin{equation} d(v_{i_{min}^{P_0}},v_{i_{max}^{P_0}})\leq d(v_{i_{min}^{P_0}},x_0)+d(x_0,v_{i_{max}^{P_0}}) \leq 2k \end{equation}

Hence, for all $i$ between $i_{min}^{P_0}$ and $i_{max}^{P_0}$, 
\begin{equation} d(v_{i_{min}^{P_0}},v_i)\leq k~or~d(v_{i_{max}^{P_0}},v_i)\leq k\end{equation}

The result is thus verified for $l=0$.\\ 

$\bullet$ Let $l$ in $(1...s)$ such that the property if verified for $l-1$.

For all $i$ between $i_{min}^{P_{l-1}}$ and $i_{max}^{P_{l-1}}$, $v_i$ is at distance at most $2k$ of $P_{l-1}$ by the induction hypothesis. Hence, $v_i$ is at distance at most $2k$ of $P_l$.

Moreover,
\begin{equation}d(v_{i_{max}^{P_{l-1}}},v_{i_{max}^{P_l}})\leq d(v_{i_{max}^{x_{l-1}}},v_{i_{max}^{x_l}})\end{equation}

and by the triangle inequality: 
\begin{equation} d(v_{i_{max}^{x_{l-1}}},v_{i_{max}^{x_l}}) \leq d(v_{i_{max}^{x_{l-1}}},x_{l-1})+d(x_{l-1},x_l)+d(x_l,v_{i_{max}^{x_l}}) \leq 2k+1\end{equation}

As the sub-path of P between $v_{i_{max}^{P_{l-1}}}$ and $v_{i_{max}^{P_l}}$ is a shortest path, it follows that for all $i$ between $i_{max}^{P_{l-1}}$ and $i_{max}^{P_l}$,
 \begin{equation} d(v_{i_{max}^{P_{l-1}}},v_i)\leq k~or~d(v_{i_{max}^{P_l}},v_i)\leq k,\end{equation} 
 
meaning that $v_i$ is at distance at most $2k$ of $P_{l-1}$ or of $x_l$.

\bigskip

A similar proof shows that for all $i$ between $i_{min}^{P_l}$ and $i_{min}^{P_{l-1}}$, $v_i$ is at distance at most $2k$ from $P_{l-1}$ or from $x_l$.
\bigskip

The property is verified by induction, and the lemma follows for $l=s$.
\end{proof} 

\begin{theorem}\label{lemma2}
	A double-BFS is a linear-time 5-approximation algorithm for the MESP problem.
\end{theorem}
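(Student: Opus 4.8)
The plan is to fix a minimum eccentricity shortest path $M = v_0, v_1, \dots, v_t$ with $ecc(M) = k := k(G)$, run the double-BFS to obtain the spread pair $(x,y)$ and the spread path $P = x_0, \dots, x_s$, where $x = x_0$ is a furthest vertex from the (arbitrary) start $r$ and $y = x_s$ is a furthest vertex from $x$. Since each BFS runs in time $\mathcal{O}(n+m)$, linearity is immediate, so the whole content is to show that every vertex of $G$ lies within $5k$ of $P$, i.e.\ $ecc(P) \le 5k$.

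First I would set up the ``projection'' bookkeeping. As $ecc(M) = k$, every vertex $u$ admits an index $\pi(u)$ with $d(u, v_{\pi(u)}) \le k$; write $q = \pi(x)$, $w = \pi(y)$, $p = \pi(r)$. Since $d(v_q, x_0) \le k$ and $d(v_w, x_s) \le k$, both $v_q$ and $v_w$ lie within $k$ of $P$, hence $i_{min}^P \le q, w \le i_{max}^P$. By Lemma~\ref{lemma1}, every $v_m$ with $i_{min}^P \le m \le i_{max}^P$ is within $2k$ of $P$, so every vertex within $k$ of the subpath $v_{i_{min}^P}, \dots, v_{i_{max}^P}$ is within $3k$ of $P$. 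The only vertices left to control are those whose projection lies in $[0, i_{min}^P)$ or $(i_{max}^P, t]$; for such a $u$ with $\pi(u) = m < i_{min}^P$ the triangle inequality gives $d(u, P) \le d(u, v_m) + d(v_m, v_{i_{min}^P}) + d(v_{i_{min}^P}, P) \le 2k + (i_{min}^P - m) \le 2k + i_{min}^P$, and symmetrically on the right. Thus the theorem reduces to the two inequalities $i_{min}^P \le 3k$ and $t - i_{max}^P \le 3k$.

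These I would extract from the two furthest-vertex properties. Because $x$ is a furthest vertex from $r$, $d(r,x) \ge \max\bigl(d(r,v_0), d(r,v_t)\bigr) \ge \max(p, t-p) - k$, while $d(r,x) \le d(r,v_p) + d(v_p, v_q) + d(v_q, x) \le |p-q| + 2k$; as $\max(p, t-p) \ge t/2$, this forces $|p-q| \ge \max(p, t-p) - 3k$ and hence $q \le 3k$ or $q \ge t - 3k$ — that is, $v_q$ sits near an endpoint of $M$. Up to reversing the labelling of $M$ (which merely swaps the roles of $v_0$ and $v_t$, hence of the two target inequalities), assume $q \le 3k$, so $i_{min}^P \le q \le 3k$ is already settled. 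Now use that $y$ is a furthest vertex from $x$: $d(x,y) \ge d(x, v_t) \ge (t - q) - k$, while $d(x,y) \le d(x, v_q) + d(v_q, v_w) + d(v_w, y) \le |q - w| + 2k$, so $|q - w| \ge (t - q) - 3k$; together with $q \le 3k$ this yields $w \ge t - 3k$, whence $i_{max}^P \ge w \ge t - 3k$. Substituting $i_{min}^P \le 3k$ and $t - i_{max}^P \le 3k$ back into the reduction gives $ecc(P) \le 5k$.

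The step I expect to be the main obstacle is making the last paragraph watertight in the boundary regime where $M$ is short relative to $k$: if $\max(p,t-p) - 3k \le 0$ (which can happen only when $t \le 6k$) the ``near an endpoint'' dichotomy for $q$ degenerates, and if in the second half the alternative $w \le 3k$ is not excluded, then both $x$ and $y$ project near $v_0$. Ruling out that configuration needs more than the triangle inequality used above: one must invoke that $y$ is genuinely a furthest vertex from $x$ (so that not only $v_t$ but any vertex near $v_t$ fails to beat $y$'s distance) to show $d(x,y)$ would then be too large, a contradiction. I would therefore isolate the trivial regime $t \le 3k$ — where $i_{min}^P, t - i_{max}^P \le t \le 3k$ hold for free and the reduction finishes at once — and, for $3k < t \le 6k$, replace the dichotomy by direct estimates of $d(x,y)$, $d(v_0,P)$ and $d(v_t,P)$ through $v_q$ and $v_w$. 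Finally, I would exhibit an explicit graph on which some spread path has eccentricity exactly $5k(G)$, showing the constant $5$ is best possible.
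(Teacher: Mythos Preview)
Your overall strategy --- project onto a MESP $M$, invoke Lemma~\ref{lemma1} for the middle segment $[i_{min}^P,i_{max}^P]$, and control the two tails via the furthest-vertex properties of $x$ and $y$ --- is exactly the paper's approach, and your derivation that $q \le 3k$ or $q \ge t-3k$ is correct. (Incidentally, the first worry you raise is empty: when $t \le 6k$ the intervals $[0,3k]$ and $[t-3k,t]$ already cover $[0,t]$, so the dichotomy for $q$ holds trivially.)

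The genuine gap is the degenerate branch for $w$. After fixing $q \le 3k$, your inequality $|q-w| \ge (t-q)-3k$ does \emph{not} force $w \ge t-3k$: it gives that conclusion only when $w \ge q$, while the alternative $w < q$ (hence $w \le 3k$) remains. You propose to exclude this by contradiction (``$d(x,y)$ would then be too large''), but that cannot work: when $q,w \le 3k$ one has $d(x,y) \le |q-w| + 2k \le 5k$, and this is perfectly consistent with $d(x,y) \ge d(x,v_t) \ge (t-q)-k$ for every $t \le 9k$ --- not just $t \le 6k$. So the configuration genuinely occurs, and in it your target inequality $i_{max}^P \ge t-3k$ may simply fail; the reduction does not close, and ``direct estimates of $d(v_0,P)$ and $d(v_t,P)$'' will not rescue it either (they give $d(u,P)\le 6k$, not $5k$).

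The fix is a one-liner and is precisely the paper's Case~(i): if $w \le q \le 3k$ then $d(x,y) \le 5k$, and since $y$ is a furthest vertex from $x$ this means $ecc(\{x\}) = d(x,y) \le 5k$, whence $ecc(P) \le ecc(\{x\}) \le 5k$ directly, bypassing the reduction. No separate small-$t$ analysis is needed; simply split on whether $w \ge t-3k$ (your main line, which then runs exactly as the paper's Case~(ii)) or $w \le 3k$ (this one-liner).
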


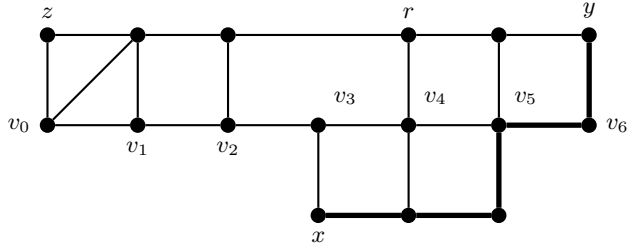
\begin{figure}

\centering
\begin{tikzpicture}[thick,scale=1.2,trans/.style={thick,<->,dashed}]

\vertex[label=left:$v_0$](0) at (0,0) {};
\vertex[label=below:$v_1$](1) at (1,0) {};
\vertex[label=below:$v_2$](2) at (2,0) {};
\vertex[](3) at (3,0) {};
\vertex[](4) at (4,0) {};
\vertex[](5) at (5,0) {};
\vertex[label=right:$v_6$](6) at (6,0) {};
\vertex[label=above:$r$](r) at (4,1) {};
\vertex[](10) at (1,1) {};
\vertex[](11) at (2,1) {};
\vertex[label=above:$z$](12) at (0,1) {};
\vertex[](13) at (4,-1) {};
\vertex[](14) at (5,1) {};
\vertex[](15) at (5,-1) {};
\vertex[label=below:$x$](x) at (3,-1) {};
\vertex[label=above:$y$](y) at (6,1) {};
\node (p3) at ( 3.3,0.3) {$v_3$};
\node (p4) at ( 4.3,0.3) {$v_4$};
\node (p4) at ( 5.3,0.3) {$v_5$};

\draw (0)--(1);
\draw (1)--(2);
\draw (2)--(3);
\draw (3)--(4);
\draw (4)--(5);
\draw[line width=2pt] (5)--(6);
\draw (2)--(11);
\draw (10)--(11);
\draw (11)--(r);
\draw (0)--(10);
\draw (1)--(10);
\draw (3)--(x);
\draw (r)--(4);
\draw[line width=2pt] (6)--(y);
\draw (12)--(0);
\draw (12)--(10);
\draw (13)--(4);
\draw[line width=2pt] (13)--(x);
\draw (14)--(r);
\draw (14)--(5);
\draw (14)--(y);
\draw[line width=2pt] (15)--(5);
\draw[line width=2pt] (15)--(13);

\end{tikzpicture}
\caption{The bound shown in Theorem~\ref{lemma2} is tight. Indeed the graph is such that $v_0,v_1...v_6$ is a shortest path of eccentricity 1. The vertex $z$ is at distance 5 from the shortest path (shown by thick edges) between $x$ and $y$ computed by double-BFS starting at $r$. \label{fig:tightness5k}}
\end{figure}

Before we prove it, notice that Figure~\ref{fig:tightness5k} shows that this bound is tight.

\begin{proof}
Let $k$ be $k(G)$,  $P = v_0,v_1...v_t$  be a MESP (its eccentricity is thus $k$), and $Q=x,...,y$ be the result of a double-BFS starting at some arbitrary vertex $r$, then reaching $x$, then reaching  $y$.
We shall prove that $Q$ is a $5k$-dominating path of $G$.

\bigskip

Let $i$ (resp. $j$) be such that $v_i$ (resp. $v_j$) is at distance at most $k$ of $r$ (resp. $x$). The following inequalities are verified:
\begin{equation}d(r,x)\geq d(r,v_t) \geq d(v_i,v_t)-d(r,v_i) \geq d(v_i,v_t)-k \end{equation}
\begin{equation}d(r,x) \leq d(r,v_i)+d(v_i,v_j)+d(v_j,x) \leq  d(v_i,v_j)+2k \end{equation} 

Combining those inequalities, 
\begin{equation} d(v_i,v_t)-3k \leq  d(v_i,v_j)\end{equation}

Similarly: 
\begin{equation}d(v_i,v_0)-3k \leq  d(v_i,v_j)\end{equation}

Therefore $v_j$ is at distance at most $3k$ of $v_0$ or $v_t$. Without loss of generality, assume that $v_j$ is at distance at most $3k$ of $v_0$.

\bigskip

Let $l$ be such that $v_l$ is at distance at most $k$ of $y$. We distinguish two cases:

\begin{enumerate}[(i)]
\item  \label{case1lemme2} $l\leq j$:

Then $y$ is at distance at most $5k$ of $x$. As $y$ is a vertex most distant from $x$, $x$ is a $5k$-dominating vertex of the graph. The lemma is then verified.

\item \label{case2lemme2} $l>j:$

Applying to $(x,y)$ the inequalities established at the beginning of the proof:
\begin{equation}d(v_j,v_t)-3k \leq  d(v_j,v_l)\end{equation}

As $l>j$, it follows that:
\begin{equation}d(v_l,v_t)\leq 3k\end{equation}

Figure~\ref{fig:5kconfig} shows the configuration of the graph in that case. 
The vertices at distance at most $k$ of a vertex $v_s$ such that $s\leq j$ (resp. $s\geq l$) are at distance at most $5k$ of $x$ (resp. $y$).

According to Lemma~\ref{lemma1}, every vertex $v$ of $G$ at distance at most $k$ of a vertex $v_s$ such that $s$ is between $j$ and $l$ is at distance at most $3k$ of any shortest path between $x$ and $y$. The lemma is thus verified.
\end{enumerate}

\end{proof}

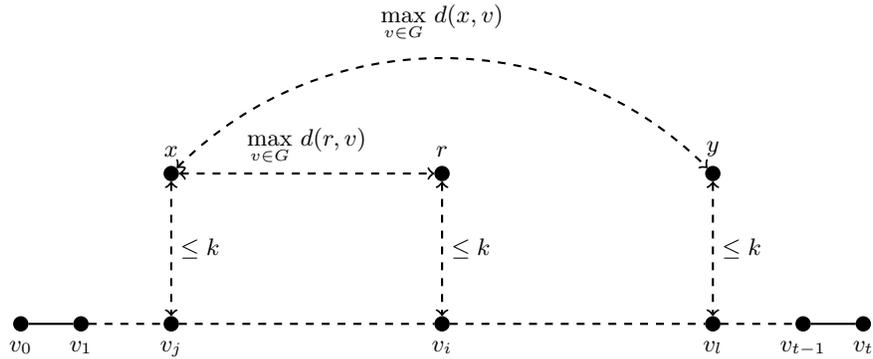
\begin{figure}

\centering
\begin{tikzpicture}[thick,scale=0.8,trans/.style={thick,<->,dashed}]

\vertex[label=below:$v_0$](0) at (0,0) {};
\vertex[label=below:$v_1$](1) at (1,0) {};
\vertex[label=below:$v_j$](j) at (2.5,0) {};
\vertex[label=below:$v_i$](i) at (7,0) {};
\vertex[label=below:$v_l$](l) at (11.5,0) {};
\vertex[label=below:$v_{t-1}$](n-1) at (13,0) {};
\vertex[label=below:$v_t$](n) at (14,0) {};
\vertex[label=above:$x$](x) at (2.5,2.5) {};
\vertex[label=above:$r$](r) at (7,2.5) {};
\vertex[label=above:$y$](y) at (11.5,2.5) {};

\draw (0)--(1);
\draw[dashed] (1)--(j);
\draw[dashed] (j)--(i);
\draw[dashed] (i)--(l);
\draw[dashed] (l)--(n-1);
\draw (n-1)--(n);
\draw[trans] (x)--(j) node[midway,right] {$\leq k$};
\draw[trans] (r)--(i) node[midway,right] {$\leq k$};
\draw[trans] (y)--(l) node[midway,right] {$\leq k$};
\draw[trans] (x)--(r) node[midway,above] {$\underset{v\in G}{\text{max}}$ $d(r,v)$};
\draw[trans] (x) to[bend left=45](y) node at (7,5) {$\underset{v\in G}{\text{max}}$ $d(x,v)$};
\end{tikzpicture}
\caption{Notations used in the proof of Theorem~\ref{lemma2} \label{fig:5kconfig}} 
\end{figure}

\section{A $3$-Approximation Algorithm}\label{section3k}

We show now that by using more BFS runs we may obtain a $3k$-approximation of MESP, still in linear time.\\ 

Let {\em bestPath} and {\em bestEcc} be global variables used as return values for the path and its eccentricity. {\em bestPath} stores a path and is uninitialized, and {\em bestEcc} is an integer initialized with $ |V(G)|$.

\begin{algorithm}[H]
	\KwData{$G$ graph, $x$,$y$ vertices of $G$, $step$ integer}
    Compute a shortest path $Q$ between $x$ and $y$\; \label{line1}
    Select a vertex $z$ of G most distant from $Q$\; \label{line2}
	\If{$d(Q,z)< bestEcc$}{
	$bestPath \gets Q$\;
	$bestEcc \gets d(Q,z)$\;
}
	\If{$step < 8$}{
	Algorithm3k($G$,$x$,$z$,$step+1$)\;
	Algorithm3k($G$,$y$,$z$,$step+1$)\;
	}
	\caption{Algorithm3k} 
\end{algorithm}

\begin{theorem}\label{key3}
	A $3$-approximation of the MESP Problem can be computed in linear time by considering a spread pair $(s,l)$ of $G$ and  running Algorithm3k($G$,$s$,$l$,$0$).
\end{theorem}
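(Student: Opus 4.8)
The plan splits into a trivial running-time part and an approximation part whose engine is Lemma~\ref{lemma1}. For the running time: Algorithm3k calls itself twice and stops as soon as $step=8$, so its recursion tree has at most $2^{9}-1$ nodes; at each node we perform one BFS to build $Q$ (line~\ref{line1}) and one BFS rooted simultaneously at all vertices of $Q$ to find a most distant vertex $z$ (line~\ref{line2}), i.e.\ $O(n+m)$ work, plus an $O(n)$ update of $bestPath$/$bestEcc$. Computing the spread pair $(s,l)$ is itself a double-BFS. Hence the total cost is $O(n+m)$.

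For the guarantee, fix a MESP $P=v_{0},v_{1},\dots,v_{t}$, so $ecc(P)=k:=k(G)$, and keep the notation $i_{min}^{R},i_{max}^{R}$ of Lemma~\ref{lemma1} for the smallest/largest index of a $v_i$ within distance $k$ of a shortest path $R$. The key remark is a sufficient condition extracted from Lemma~\ref{lemma1}: \emph{if some shortest path $Q$ examined by the algorithm satisfies $i_{min}^{Q}\le k$ and $i_{max}^{Q}\ge t-k$, then $ecc(Q)\le 3k$}, hence the path returned by the algorithm has eccentricity at most $3k(G)$. Indeed, any vertex $w$ lies within $k$ of some $v_{s}$; if $i_{min}^{Q}\le s\le i_{max}^{Q}$ then Lemma~\ref{lemma1} gives $d(v_{s},Q)\le 2k$, hence $d(w,Q)\le 3k$; if $s<i_{min}^{Q}\le k$ (resp.\ $s>i_{max}^{Q}\ge t-k$) the triangle inequality through $v_{i_{min}^{Q}}$ (resp.\ $v_{i_{max}^{Q}}$) gives $d(w,Q)\le 2k+k=3k$. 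So it suffices to show that Algorithm3k($G$,$s$,$l$,$0$) eventually examines such a $Q$ within $8$ levels of recursion; and if some intermediate $Q$ already has $ecc(Q)\le 3k$ we are done a fortiori.

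To produce it I would descend along one well-chosen branch of the recursion tree, monitoring the pair $\big(i_{min}^{Q},\,t-i_{max}^{Q}\big)$ of the current path $Q$. Two facts drive the descent. First, \emph{initialisation}: the argument of Theorem~\ref{lemma2} applied to the spread path $Q_{0}$ between $s$ and $l$ shows that, after possibly exchanging the names of $v_{0}$ and $v_{t}$, both coordinates of this pair are $O(k)$ — concretely $i_{min}^{Q_{0}}\le 3k$ and $i_{max}^{Q_{0}}\ge t-3k$ in the non-degenerate regime, the degenerate case (where $l$ is only $5k$-close to $s$, forcing $t=O(k)$ and $Q_0$ short) being treated apart. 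Second, \emph{progress}: if the current $Q$, with endpoints $x,y$, does not already satisfy $ecc(Q)\le 3k$, then the vertex $z$ of line~\ref{line2} has $d(z,Q)>3k$; writing $d(z,v_{m})\le k$ one gets $d(v_{m},Q)>2k$, so Lemma~\ref{lemma1} forbids $i_{min}^{Q}\le m\le i_{max}^{Q}$, and pushing the triangle inequality through $v_{i_{min}^{Q}}$ or $v_{i_{max}^{Q}}$ moreover forces $m<i_{min}^{Q}-k$ or $m>i_{max}^{Q}+k$. In the first case the child call on the pair $(y,z)$ — choosing for $y$ the endpoint of $Q$ that sits near $v_{t}$ — yields a path $Q'$ with $i_{min}^{Q'}\le m<i_{min}^{Q}-k$ while $t-i_{max}^{Q'}$ stays $O(k)$; the second case is symmetric. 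So each non-terminal level strictly closes one of the two $O(k)$-sized gaps by more than $k$, and since the branching into the two calls $(x,z)$ and $(y,z)$ lets us always follow the productive direction, a bounded number of levels brings both gaps to $0$, which the budget $step<8$ accommodates.

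The hard part is precisely this last accounting. One must (i) pin down the constants so that the two gaps are provably emptied within $8$ levels; (ii) check that the step pushing one end toward $v_{0}$ does not destroy the progress already made toward $v_{t}$ — this is exactly why the algorithm recurses on \emph{both} $(x,z)$ and $(y,z)$ rather than on a single pair, and why the choice of which child to follow has to be made carefully at each step; and (iii) dispose of the degenerate short-MESP configuration (the case (i) of the proof of Theorem~\ref{lemma2}), where the spread path need not come close to $v_{t}$ at all and the first recursive steps must first ``discover'' that end before the monotone progress argument kicks in.
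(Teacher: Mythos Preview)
Your plan is the paper's plan: Lemma~\ref{lemma1} as the engine, an interval $[i_{min},i_{max}]$ on $P$ as the invariant, and the observation that if the current path does not already $3k$-cover then the far vertex $z$ lands outside the interval by more than $k$, so one of the two recursive calls extends it. The running-time argument is fine.

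The difference that dissolves the difficulties you list under (ii) is a change of invariant. You track $i_{min}^{Q},i_{max}^{Q}$ relative to the whole path $Q$; the paper instead defines $i_{min}^{x,y},i_{max}^{x,y}$ as the extremal indices of $P$-vertices within distance $k$ of one of the two \emph{endpoints} $x,y$. Since this interval depends only on $\{x,y\}$ and not on which shortest path is chosen, replacing one endpoint by $z$ while keeping the other automatically preserves the bound on the kept side. A short case split --- which of $x,y$ realises $i_{min}^{x,y}$, and whether $i_z$ sits below $i_{min}^{x,y}-k$ or above $i_{max}^{x,y}+k$ --- then shows that at least one of the two children $(x,z),(y,z)$ has an interval that \emph{contains} the old one and is longer by at least $k$. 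So your worry about ``destroying progress on the other side'' disappears, and there is no need to know which endpoint ``sits near $v_t$''.

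On (i) and (iii): the paper does not carve out a degenerate regime. It reads directly from the two cases of the proof of Theorem~\ref{lemma2} the uniform bound $i_{min}^{s,l}\le 5k$ and $i_{max}^{s,l}\ge t-5k$ for the spread pair. Your $3k$ initialisation is what case~(ii) of that proof gives, but case~(i) does not; the looser $5k$ absorbs both. With a total deficit of at most $(5k-k)+(5k-k)=8k$ and at least $k$ of progress per level on the good branch, eight levels suffice --- which is exactly the hard-coded recursion depth.

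In short, there is no wrong idea in your sketch; switch your invariant from path-based to endpoint-based and your three ``hard parts'' become the paper's one-paragraph case analysis.
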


\begin{proof}[Correctness]
Let G be a graph admitting a shortest path P = $v_0,v_1...v_t$ of eccentricity k.

\medskip

Let $x$ and $y$ be any vertices of $G$, $Q_{x,y}$ a shortest path between $x$ and $y$. 
Define $i_{min}^{x,y}$ (resp. $i_{max}^{x,y}$) as the smallest (resp. largest) integer such that $v_{i_{min}^{x,y}}$ (resp. $v_{i_{max}^{x,y}}$) is at distance at most $k$ of $x$ or $y$. Then, by Lemma~\ref{lemma1}, 

\begin{equation} \label{Qdistance} \mbox{For all } j \mbox{ such that } i_{min}^{x,y} - k \leq j \leq i_{max}^{x,y} +k , \quad d(Q_{x,y},v_j) \leq 2k   \end{equation} 

Hence, if $i_{min}^{x,y} \leq k$ and  $i_{max}^{x,y} \geq t-k$, every vertex of $P$ is at distance at most $2k$ of $Q_{x,y}$ and, as $P$ is of eccentricity $k$, $Q_{x,y}$ is of eccentricity at most $3k$.

\medskip

Algorithm3k uses this implication to exhibit a pair $x,y$ such that $Q_{x,y}$ is of eccentricity at most $3k$. Indeed, in each recursive call, one of the following cases holds:

\begin{enumerate}
\item  the vertex $z$ selected at line $3$ is at distance at most $3k$ from $Q_{x,y}$. In that case, {\em bestPath} will be set to $Q_{x,y}$ unless it already contains a path of even better eccentricity. In any case, the result of the algorithm is a path of eccentricity at most $3k$.
\item the vertex $z$ is at a distance greater than $3k$ of $Q_{x,y}$.  Let $i_z$ be such that $v_{i_z}$ is at distance at most $k$ of $z$. Then, according to Equation~(\ref{Qdistance}),

\begin{equation}i_z \leq i_{min}^{x,y}-k~or~i_z \geq i_{max}^{x,y}+k\end{equation}

\begin{enumerate}
\item
Suppose that $i_z\geq i_{max}^{x,y}+k$. Then, in the case $d(v_{i_{min}^{x,y}},x)= k$, we get $i_{min}^{x,z}\leq i_{min}^{x,y}$ and $i_{max}^{x,z}\geq i_{max}^{x,y}+k$. And in the case  $d(v_{i_{min}^{x,y}},y) = k$ we get $i_{min}^{x,z}\leq i_{min}^{x,y} - k$ and $i_{max}^{x,z}\geq i_{max}^{x,y}$.

\item A similar reasoning can be applied if $i_z \leq i_{min}^{x,y}-k$, also yielding to $i_{min}^{x,z}\leq i_{min}^{x,y}$ and $i_{max}^{x,z}\geq i_{max}^{x,y}+k$ or $i_{min}^{x,z}\leq i_{min}^{x,y} - k$ and $i_{max}^{x,z}\geq i_{max}^{x,y}$.
\end{enumerate}
\end{enumerate}

Therefore, either the algorithm already found a path of eccentricity at most $3k$, or it makes one of its two new calls with a couple $(x',y')$ such that  the interval $[i_{min}^{x',y'},i_{max}^{x',y'}]$ contains $[i_{min}^{x,y},i_{max}^{x,y}]$ but has length increased by at least $k$.

\medskip

Consider now a spread pair $(s,l)$ for which Algorithm3k($G$,$s$,$l$,$0$) is run. It follows from case~(\ref{case1lemme2}) and  (\ref{case2lemme2}) of the proof of Theorem~\ref{lemma2} that
\begin{equation}i_{min}^{s,l}\leq 5k \mbox{ and } i_{max}^{s,l}\geq t-5k\end{equation}

At each of the recursive calls, if no path of eccentricity at most $3k$ has already been discovered, one of the new calls expands the interval $[i_{min}^{x,y},i_{max}^{x,y}]$ length by at least $k$, while containing the previous interval.
As the recursive calls are made until $step=8$, it follows that either a path of eccentricity $3k$ has been discovered, or one of the explored possibilities corresponds to eight extensions of size at least $k$ starting from $[i_{min}^{s,l},i_{max}^{s,l}]$.

In the latter case, Equation~(14) implies that the final couple of vertices $(x,y)$ fulfills 
 $i_{min}^{x,y} \leq k$ and $i_{max}^{x,y} \geq t-k$. 
Every vertex of $P$ is then of distance at most $2k$ of $Q_{x,y}$ and thus $Q_{x,y}$ is of eccentricity at most $3k$.

\end{proof}

\begin{proof}[Complexity]
The algorithm computes two BFS trees at line \ref{line1} and \ref{line2}, taking $\mathcal{O}(n+m)$ time. The rest of the operations is computed in constant time.

The recursivity width is 2 and, since it is first called with $step=0$, the recursivity length is 8. The algorithm is thus called 255 times. Therefore the total runtime of the algorithm is $\mathcal{O}(n+m)$.
\end{proof}

\begin{proof}[Tightness of the approximation]
Figure \ref{fig3} shows a graph for which the algorithm may produce a path of eccentricity $3k(G)$ (see caption).
\end{proof}

\begin{figure}
\centering
\begin{tikzpicture}[thick,scale=1.2,trans/.style={thick,<->,dashed}]

\vertex[label=left:$v_0$](0) at (0,0) {};
\vertex[label=below:$v_1$](1) at (1,0) {};
\vertex[label=below:$v_2$](2) at (2,0) {};
\vertex[label=above:$v_3$](3) at (3,0) {};
\vertex[label=below:$v_4$](4) at (4,0) {};
\vertex[label=below:$v_5$](5) at (5,0) {};
\vertex[label=below:$v_6$](6) at (6,0) {};
\vertex[label=above:$v_7$](7) at (0.66,1) {};
\vertex[label=above:$v_8$](8) at (1.66,1) {};
\vertex[label=above:$v_9$](9) at (3,1) {};
\vertex[label=above:$v_{10}$](10) at (4.33,1) {};
\vertex[label=above:$v_{11}$](11) at (5.33,1) {};
\vertex[label=below:$v_{12}$](12) at (3,-1) {};

\draw (0)--(1);
\draw (1)--(2);
\draw (2)--(3);
\draw (3)--(4);
\draw (4)--(5);
\draw (5)--(6);
\draw (0)--(7);
\draw (7)--(8);
\draw (8)--(9);
\draw (9)--(10);
\draw (10)--(11);
\draw (11)--(6);
\draw (3)--(12);
\draw (1)--(8);
\draw (2)--(9);
\draw (4)--(9);
\draw (5)--(10);

\end{tikzpicture}
\caption{Tightness of the bound shown in Theorem~\ref{key3}. The algorithm may indeed loop between the following couples of vertices : $(v_0,v_6),(v_0,v_{12}),(v_6,v_{12}),(v_0,v_{11}),(v_{11},v_{12}),(v_6,v_7),(v_7,v_{12}),(v_{11},v_7)$. Each time, it may choose a shortest path of eccentricity $3$ (passing through $v_8$ $v_9$ and $v_{10}$ whenever $v_{12}$ is not an endvertex of the path) while $v_0..v_3..v_6$ has eccentricity 1.\label{fig3}}
\end{figure}
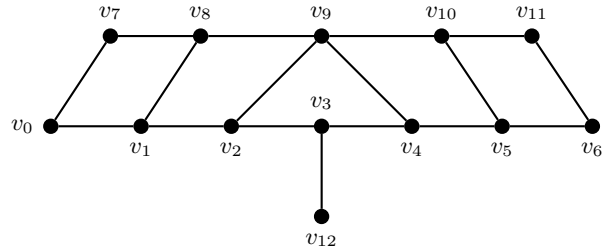

\section{Bounds between MESP and Laminarity}
In this section, we investigate the link between the MESP problem and the notion of laminarity introduced by V\"olkel \emph{et al.} in \cite{Volkel2016}. The study of the $k$-laminar graph class finds motivation both from a theoretical and practical point of view. On the theoretical side,  AT-free graphs form a well known  graph class introduced half a century ago by Lekkerkerker and Boland~\cite{LB62}, which contains many graph classes like co-comparability graphs. An AT-free graph admits a diameter all other vertices are adjacent with~\cite{corneil1997asteroidal}. It is then natural to extend this notion of dominating diameter. On the practical side, some large graphs constructed from reads similarity networks of genomic or metagenomic data appear to have a very long diameter and all vertices at short distance from it~\cite{Volkel2016}, and exhibiting the "best" diameter allows to better understand their structure.

\begin{definition}[laminarity]
A graph $G$ is 
\begin{itemize} 
\item $l$-laminar if $G$ has a diameter of eccentricity at most $l$.
\item $s$-strongly laminar if every diameter has eccentricity at most $s$.
\end{itemize}
$l(G)$ and $s(G)$ denote the minimal values of $l$ and $s$ such that $G$ is respectively $l$-laminar and $s$-strongly laminar. 
\end{definition}

A natural question about laminarity and MESP is to ask what link exists between them.

\begin{theorem}\label{propositionklaminarMESP}

For every graph $G$, 
\begin{align*}
k(G) & \leq l(G) \leq 4 k(G)-2 \\
k(G) & \leq s(G) \leq 4 k(G)
\end{align*}

Moreover, there exist three graph sequences  $(G_k)_{k\geq 1}$, $(H_k)_{k\geq 1}$ and $(J_k)_{k\geq 1}$  such that, for every $k$, 
\begin{itemize}
\item $k(G_k) = l(G_k) = s(G_k) = k$;
\item $k(H_k)=k$ and $l(H_k)=4k-2$;
\item $k(J_k)=k$ and $s(J_k)=4k$;
\end{itemize}

The bounds given by the inequalities are therefore tight.
\end{theorem}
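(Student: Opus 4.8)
The plan is to prove the four inequalities separately, then exhibit the three extremal sequences. Since a diameter is in particular a shortest path, the lower bounds $k(G)\le l(G)$ and $k(G)\le s(G)$ are immediate from the definition of $k(G)$ as the minimum eccentricity over \emph{all} shortest paths. Moreover $l(G)\le s(G)$ trivially (a witness diameter for strong laminarity witnesses laminarity), so the two upper bounds are closely related and I would aim to prove $s(G)\le 4k(G)$ and then refine to $l(G)\le 4k(G)-2$.

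\textbf{Upper bound $s(G)\le 4k(G)$.} Let $k=k(G)$, let $P=v_0,\dots,v_t$ be a MESP of eccentricity $k$, and let $D=x_0,\dots,x_d$ be an \emph{arbitrary} diameter; I must bound $ecc(D)$. The idea is to apply Lemma~\ref{lemma1} with the roles set so that $P$ is the eccentricity-$k$ shortest path and $D$ plays the role of the path ``$P$'' in that lemma. Let $i_{\min}$ and $i_{\max}$ be the extreme indices along $P$ of vertices within distance $k$ of $D$. By Lemma~\ref{lemma1}, every $v_i$ with $i_{\min}\le i\le i_{\max}$ is within $2k$ of $D$, hence every vertex within $k$ of that subpath is within $3k$ of $D$. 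So the only vertices possibly far from $D$ are those covered by $v_0,\dots,v_{i_{\min}-1}$ or by $v_{i_{\max}+1},\dots,v_t$, i.e.\ the two ``tails'' of $P$. The key step is to show these tails are short: since $D$ is a diameter, $d(x_0,x_d)=diam(G)\ge d(v_0,v_t)$, and using that $v_{i_{\min}}$ (resp.\ $v_{i_{\max}}$) is within $k$ of some $x_a$ (resp.\ $x_b$), a triangle-inequality computation analogous to the one opening the proof of Theorem~\ref{lemma2} forces $i_{\min}$ and $t-i_{\max}$ to be small. Concretely one gets $d(v_0,v_{i_{\min}}) \le$ (something like) $2k$ and similarly on the other end — a vertex within $k$ of $v_0,\dots,v_{i_{\min}}$ is then within $k+2k+k=4k$ of $x_0$, hence within $4k$ of $D$. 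Combining with the $3k$ bound on the middle gives $ecc(D)\le 4k$. Carrying out the exact constants in this tail estimate is the main obstacle; I would mirror the inequality chain $d(v_{i_{\min}},v_t)-3k\le d(v_{i_{\min}},v_j)$ used in Theorem~\ref{lemma2}, replacing the spread-pair hypothesis by the diameter hypothesis $d(x_0,x_d)\ge d(v_0,v_t)$.

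\textbf{Refinement to $l(G)\le 4k(G)-2$.} For laminarity we get to \emph{choose} the diameter. Among all diameters, pick one, $D$, whose endpoints $x_0,x_d$ are as close as possible to the endpoints $v_0,v_t$ of the MESP — or more robustly, argue that there is a diameter with $i_{\min}$ and $t-i_{\max}$ smaller by at least one unit than the worst case, shaving $1$ off each of the two tail bounds and hence $2$ off the total. The honest way is: run the proof above but observe the slack — the inequality $d(r,x)\le d(r,v_i)+d(v_i,v_j)+d(v_j,x)$-type bound is rarely tight, and with the extra freedom of choosing the diameter one can always avoid the two edge-cases that cost the last two units. This is the most delicate part and I expect it to require a careful case analysis of whether $v_0$ (resp.\ $v_t$) is itself within $k$ of $D$.

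\textbf{Tight examples.} For $(G_k)$ with $k(G_k)=l(G_k)=s(G_k)=k$, take a long path $v_0,\dots,v_t$ (with $t$ large compared to $k$) and attach to each $v_i$ a pendant path of length $k$; then the original path is both a MESP and the unique-up-to-symmetry diameter, all of eccentricity exactly $k$. For $(H_k)$ achieving $l(H_k)=4k-2$: start from a short MESP of eccentricity $k$ — essentially a $K_{1,3}$-like gadget where three length-$k$ paths meet — and graft onto it two long ``detour'' branches so that the \emph{only} diameter is forced to run along the detours, leaving one gadget-tip at distance $4k-2$ from every diameter; this is a scaled-up version of Figure~\ref{fig3}/Figure~\ref{fig:tightness5k}. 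For $(J_k)$ achieving $s(J_k)=4k$, take $H_k$ (or a variant) and add edges so that a \emph{second}, ``bad'' diameter exists of eccentricity $4k$ while the good diameter of eccentricity $4k-2$ is no longer unique — the extra two units come from the same edge-cases dropped in the $l$-bound refinement. I would verify in each case that (i) the claimed path is a shortest path / diameter of the stated length, (ii) its eccentricity equals the claimed value, and (iii) no shorter-eccentricity alternative exists, the last being the point that needs the most care. The main obstacle overall is getting the constants in the tail argument exactly right and matching them with hand-built extremal graphs.
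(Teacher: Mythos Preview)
Your approach for the lower bounds and for $s(G)\le 4k(G)$ is essentially the paper's: apply Lemma~\ref{lemma1} so that the ``middle'' of $P$ lies within $2k$ of $D$ (hence vertices $k$-covered by that middle lie within $3k$ of $D$), and use the diameter hypothesis $d(x_0,x_d)\ge d(v_0,v_t)$ together with the triangle inequality to bound each tail of $P$ by $2k$, giving $k+2k+k=4k$. Your ``something like $2k$'' is exactly the right constant and the argument goes through.

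The genuine gap is in the refinement to $l(G)\le 4k(G)-2$. Your plan to ``shave $1$ off each of the two tail bounds'' by choosing the endpoints of the diameter close to $v_0,v_t$ is not how the two units are recovered, and as stated it does not go through. The paper's mechanism is a dichotomy. First, if $P$ is itself a diameter we are done; otherwise $|P|\le |D|-1$. Let $v_l,v_m$ be vertices of $P$ within $k$ of $x_0,x_d$ respectively; one always has $d(v_l,v_m)\ge |D|-2k$. If this inequality is \emph{strict}, then combining $d(v_l,v_m)\ge |D|-2k+1$ with $|P|\le |D|-1$ yields the tail bound $d(v_i,v_l)\le 2k-2$, and the original diameter $D$ already has eccentricity at most $4k-2$. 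If instead $d(v_l,v_m)=|D|-2k$ exactly, then the concatenation of a shortest $x_0$--$v_l$ path, the subpath $v_l,\dots,v_m$ of $P$, and a shortest $v_m$--$x_d$ path has total length $k+(|D|-2k)+k=|D|$, so it is \emph{another diameter} $D'$; because $D'$ contains a long stretch of $P$, one checks $ecc(D')\le 3k$. Thus the freedom in the definition of $l(G)$ is exploited not by nudging endpoints but by splicing a piece of the MESP into $D$ precisely when the slack in the tail estimate vanishes. Without this splice-construction your argument stalls at $4k-1$.

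Your $G_k$ works (the paper's version is simpler: a path of length $4k$ with a single pendant path of length $k$ attached at its midpoint). Your descriptions of $H_k$ and $J_k$ are too schematic to verify; the paper's constructions are fairly intricate grid-based gadgets, and in particular $J_k$ is not obtained from $H_k$ by ``adding edges to create a second bad diameter'' but is a separate family.
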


\begin{proof}[$k(G) \leq l(G)$ and $k(G) \leq s(G)$]

Those inequalities are straightforward as every diameter is by definition a shortest path. The eccentricity of every diameter is therefore always greater than $k(G)$. 
\end{proof}

\begin{proof}[$s(G) \leq 4 k(G)$]\label{proofalpha4k}

Let $D=x_0,x_1,...x_s$ be a diameter of $G$ and $P=v_0,v_1...v_t$ a shortest path of eccentricity $k$. 
We shall show $ecc(D)\le 4k$. 
Let $z$ be any vertex of $G$. Since $ecc(P)=k$ there exists  a vertex $v_i$ of $P$ such that $d(z,v_i)\leq k$. Let us distinguish three cases:\\

$\bullet$ Case 1: there exists vertices $x_a$, $x_b$ of $D$ and $v_a$, $v_b$ of $P$ such that $a \leq i \leq b$ and
$d(v_a, x_a)\le k$ and $d(v_b, x_b)\le k$. Then by Lemma~\ref{lemma1}, $z$ is at distance at most $3k$ from any shortest path between $x_a$ and $x_b$, and thus at distance at most $3k$ of $D$.\\

$\bullet$ Case 2: there exists no vertex $v_a$ of $P$ with $a\le i$ and $d(v_a,D)\le k$

$\bullet$ Case 3: there exists no vertex $v_a$ of $P$ with $i\le a$ and $d(v_a,D)\le k$. 

\bigskip

Without loss of generality we focus on Case~2 (illustrated in Figure~\ref{fig:alpha4k}), which is symmetric with Case~3. Let $l$ (resp. $m$) be such that $v_l$ (resp. $v_m$) is at distance at most $k$ of $x_0$ (resp. $x_s$), assume $l\leq m$:
\begin{equation}
d(v_l,v_m) \geq d(x_0,x_s)-2k
\end{equation}

$D$ being a diameter,
\begin{equation}
d(x_0,x_s) \geq d(v_0,v_t) 
\end{equation}

By combining those inequalities,
\begin{equation}
d(v_l,v_m) \geq d(v_0,v_t) -2k
\end{equation}
\begin{equation}
d(v_l,v_m)\geq d(v_0,v_i)+d(v_i,v_l)+d(v_l,v_m)+d(v_m,v_t)-2k
\end{equation}
\begin{equation}\label{eq18}
2k \geq d(v_i,v_l)
\end{equation}

It follows that $z$ is at distance at most $4k$ of $x_0$.
\end{proof}

\begin{proof}[$l(G) \leq 4k(G)-2$]

Let $D=x_0,x_1,...x_s$ be a diameter of $G$ and $P=v_0,v_1...v_t$ a shortest path of eccentricity $k$.
We shall show that either $ecc(D) \leq 4k-2$ or $G$ contains a diameter $D'$ of eccentricity $3k$. If $P$ is a diameter we are done. Let us suppose from now it is of length at most $|D|-1$.

\bigskip

Let $z$ be any vertex of $G$ and $v_i$ a vertex of $P$ such that $d(z,v_i)\leq k$.
Let us distinguish the same three cases than in the proof that $s(G) \leq 4 k(G)$. The first case also leads to $d(z,D)\le 3k$. The second and third being symmetric, let us suppose there exists no vertex $v_j$ of $P$ at distance at most k of $D$ such that $j\leq i$. 

\bigskip

Let $v_l$ (resp. $v_m$) be a vertex of P at distance at most $k$ from $x_0$ (resp. $x_s$), clearly, 
\begin{equation}
d(v_l,v_m)\geq |D|-2k.
\end{equation}

Let us distinguish two subcases:\\

$\bullet$ Case 2.1: $d(v_l,v_m)>|D|-2k$, 
\begin{equation}
d(v_i,v_l)\leq d(v_0,v_t)-d(v_l,v_m) \leq (|D|-1) - (|D|-2k+1) \leq 2k-2
\end{equation}

It follows that $z$ is at distance at most $4k-2$ of $D$.

$\bullet$ Case 2.2: $d(v_l,v_m)=|D|-2k$

In this case, a path $D'=x_0,..v_l,v_{l+1},..v_m,..x_s$ is a diameter. Assuming $l\leq m$, Equation~\ref{eq18} in previous proof shows that:

\begin{equation}
d(v_i,v_l) \leq 2k
\end{equation}

and with a symmetrical reasoning,
\begin{equation}
d(v_m,v_t) \leq 2k
\end{equation}

It follows that any vertex $v$ of $G$ at distance at most $k$ of a vertex $v_a$ with $a\leq l$ (resp. $a \geq m$) is at distance at most $3k$ of $v_l$ (resp. $v_m$). Hence at distance at most $3k$ of $D'$.
$v_l,v_{l+1},..v_m$ being a subpath of $D'$, any vertex $v$ of $G$ at distance at most $k$ of a vertex $v_a$ with $a$ between $m$ and $t$ is at distance at most k of $D'$.
Finally, any vertex of $G$ is at distance at most $3k$ of $D'$.
\end{proof}

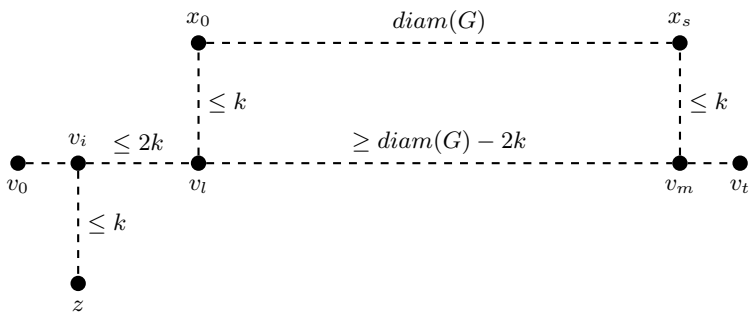
\begin{figure}

\centering
\begin{tikzpicture}[thick,scale=0.8,trans/.style={thick,<->,dashed}]

\vertex[label=below:$v_0$](0) at (0,0) {};
\vertex[label=above:$v_i$](i) at (1,0) {};
\vertex[label=below:$v_l$](l) at (3,0) {};
\vertex[label=below:$v_m$](m) at (11,0) {};
\vertex[label=below:$v_t$](t) at (12,0) {};
\vertex[label=below:$z$](z) at (1,-2) {};
\vertex[label=above:$x_0$](x0) at (3,2) {};
\vertex[label=above:$x_s$](xs) at (11,2) {};

\draw[dashed] (0)--(i);
\draw[dashed] (i)--(l) node[midway,above] {$\leq 2k$};
\draw[dashed] (l)--(m) node[midway,above] {$\geq diam(G)-2k$};
\draw[dashed] (m)--(t);
\draw[dashed] (i)--(z) node[midway,right] {$\leq k$};
\draw[dashed] (l)--(x0) node[midway,right] {$\leq k$};
\draw[dashed] (m)--(xs) node[midway,right] {$\leq k$};
\draw[dashed] (x0)--(xs) node[midway,above] {$diam(G)$};

\end{tikzpicture}
\caption{Notations used in Case 2 of the proof of Theorem~\ref{propositionklaminarMESP}\label{fig:alpha4k}} 
\end{figure}

\begin{proof}[Tightness of the bounds]

Consider the graph $G_k$  reduced to a path $P$ of length $4k$ to which a second path of length $k$ is attached in the middle. $P$ is then simultaneously the only diameter and the MESP, and it $k$-covers $G_k$ but doesn't $(k-1)$-cover it. Hence the inequalities $k(G) \leq l(G)$ and $k(G) \leq s(G)$ are tight.

\bigskip

 Figure~\ref{fig:tightness4} shows how to build the graph sequence $(J_k)_{k\ge 1}$ (only $J_1$ and $J_6$ are drawn). $J_k$ is a graph with a shortest path of eccentricity $k$ and a diameter of eccenticity $4k$. The inequality $s(G) \leq 4 k(G)$ is thus tight.

\bigskip

 Figure~\ref{fig:tightnessStrong} shows how to build the graph sequence $(H_k)_{k\ge 1}$ (only $H_1$, $H_2$ and and $H_6$ are drawn). $H_k$ is a graph with a shortest path of eccentricity $k$, while the unique diameter has eccenticity $4k-2$ ($H_1$ is a special case with two diameters). The inequality $l(G) \leq 4k(G)-2$ is therefore tight.
 \end{proof}

\begin{figure}[t!]
\centering
\begin{tikzpicture}[thick,scale=0.5,trans/.style={thick,<->,dashed}]

\vertex[label=below:$x_0$](1,0) at (1,0) {};
\vertex[](2,0) at (2,0) {};
\vertex[](3,0) at (3,0) {};
\vertex[](4,0) at (4,0) {};
\vertex[](5,0) at (5,0) {};
\vertex[](6,0) at (6,0) {};
\vertex[label=below:$v_0$](7,0) at (7,0) {};
\vertex[](8,0) at (8,0) {};
\vertex[](9,0) at (9,0) {};
\vertex[](10,0) at (10,0) {};
\vertex[](11,0) at (11,0) {};
\vertex[](12,0) at (12,0) {};
\vertex[](13,0) at (13,0) {};
\vertex[](14,0) at (14,0) {};
\vertex[](15,0) at (15,0) {};
\vertex[](16,0) at (16,0) {};
\vertex[](17,0) at (17,0) {};
\vertex[](18,0) at (18,0) {};
\vertex[label=right:$v_{2k}$](19,0) at (19,0) {};
\vertex[](2,1) at (2,1) {};
\vertex[](3,2) at (3,2) {};
\vertex[](4,3) at (4,3) {};
\vertex[](5,4) at (5,4) {};
\vertex[](6,5) at (6,5) {};
\vertex[](7,6) at (7,6) {};
\vertex[](8,6) at (8,6) {};
\vertex[](9,6) at (9,6) {};
\vertex[](10,6) at (10,6) {};
\vertex[](11,6) at (11,6) {};
\vertex[](12,6) at (12,6) {};
\vertex[](13,6) at (13,6) {};
\vertex[](14,6) at (14,6) {};
\vertex[](15,6) at (15,6) {};
\vertex[](16,6) at (16,6) {};
\vertex[](17,6) at (17,6) {};
\vertex[](18,6) at (18,6) {};
\vertex[](19,6) at (19,6) {};
\vertex[](20,6) at (20,6) {};
\vertex[](21,6) at (21,6) {};
\vertex[](22,6) at (22,6) {};
\vertex[](23,6) at (23,6) {};
\vertex[](24,6) at (24,6) {};
\vertex[label=below:$v_{4k}$](25,6) at (25,6) {};
\vertex[](14,7) at (14,7) {};
\vertex[](15,8) at (15,8) {};
\vertex[](16,9) at (16,9) {};
\vertex[](17,10) at (17,10) {};
\vertex[](18,11) at (18,11) {};
\vertex[](19,12) at (19,12) {};
\vertex[](20,12) at (20,12) {};
\vertex[](21,12) at (21,12) {};
\vertex[](22,12) at (22,12) {};
\vertex[](23,12) at (23,12) {};
\vertex[](24,12) at (24,12) {};
\vertex[label=right:$z$](25,12) at (25,12) {};
\vertex[](25,11) at (25,11) {};
\vertex[](25,10) at (25,10) {};
\vertex[](25,9) at (25,9) {};
\vertex[](25,8) at (25,8) {};
\vertex[](25,7) at (25,7) {};
\vertex[](25,6) at (25,6) {};
\vertex[](19,1) at (19,1) {};
\vertex[](19,5) at (19,5) {};
\vertex[](19,4) at (19,4) {};
\vertex[](19,3) at (19,3) {};
\vertex[](19,2) at (19,2) {};
\vertex[label=below:$x_{k+1}$](8,-1) at (8,-1) {};
\vertex[](9,-2) at (9,-2) {};
\vertex[](10,-3) at (10,-3) {};
\vertex[](11,-4) at (11,-4) {};
\vertex[](12,-5) at (12,-5) {};
\vertex[label=below:$x_{2k}$](13,-6) at (13,-6) {};
\vertex[](14,-6) at (14,-6) {};
\vertex[](15,-6) at (15,-6) {};
\vertex[](16,-6) at (16,-6) {};
\vertex[](17,-6) at (17,-6) {};
\vertex[](18,-6) at (18,-6) {};
\vertex[label=below:$x_{3k}$](19,-6) at (19,-6) {};
\vertex[](20,-6) at (20,-6) {};
\vertex[](21,-6) at (21,-6) {};
\vertex[](22,-6) at (22,-6) {};
\vertex[](23,-6) at (23,-6) {};
\vertex[](24,-6) at (24,-6) {};
\vertex[label=below:$x_{4k}$](25,-6) at (25,-6) {};

\node (p6) at ( 13.5,-0.5) {$v_{k}$};
\node (p6) at ( 19.7,5.6) {$v_{3k}$};
\node (p6) at ( 3.5,3.9) {\large{$k$}};
\node (p6) at ( 9.5,6.8) {\large{$k$}};
\node (p6) at ( 15.4,9.8) {\large{$k$}};
\node (p6) at ( 22,12.8) {\large{$k$}};

\foreach \x in {0,...,6}
    \foreach \y in {0,...,6} 
    {
     \vertex[](\x\y) at (\x+19,\y+6) {};
     \vertex[](\x\y) at (\x+7,\y) {};

}

\foreach \x in {0,...,6}
    \foreach \y in {0,...,\x} 
    {
     \vertex[](\x\y) at (\x+1,\y ) {};
     \vertex[](\x\y) at (\x+7, - \y) {};
     \vertex[](\x\y) at (\x+13, \y+6) {};

}

\foreach \x in {0,...,5}
    \foreach \y in {\x,...,5} 
    {
     \vertex[](\x\y) at (\x+19, - \y ) {};
     \vertex[](\x\y) at (\x+13, - \y  ) {};
}
\draw (1,0)--(7,6) ;
\draw (7,6)--(19,6);
\draw (13,0)--(19,-6);
\draw[very thick,green] (19,6)--(25,6);
\draw (13,6)--(19,12);
\draw (25,12)--(19,12);
\draw (25,12)--(25,6);
\draw (19,-6)--(19,6);
\draw[very thick,green] (19,0)--(19,6);
\draw[very thick,green] (7,0)--(19,0);
\draw[very thick,red](1,0)--(7,0);
\draw[very thick,red] (7,0)--(13,-6);
\draw (25,-6)--(19,0);
\draw[very thick,red] (25,-6)--(13,-6);
\draw (13,-6)--(13,6);
\draw (12,-5)--(12,6);
\draw (11,-4)--(11,6);
\draw (10,-3)--(10,6);
\draw (9,-2)--(9,6);
\draw (8,-1)--(8,6);
\draw (7,0)--(7,6);
\draw (6,0)--(6,5);
\draw (5,0)--(5,4);
\draw (4,0)--(4,3);
\draw (3,0)--(3,2);
\draw (2,0)--(2,1);
\draw (13,-1)--(18,-6);
\draw (13,-2)--(17,-6);
\draw (13,-3)--(16,-6);
\draw (13,-4)--(15,-6);
\draw (13,-5)--(14,-6);
\draw (19,-1)--(24,-6);
\draw (19,-2)--(23,-6);
\draw (19,-3)--(22,-6);
\draw (19,-4)--(21,-6);
\draw (19,-5)--(20,-6);
\draw (14,7)--(25,7);
\draw (15,8)--(25,8);
\draw (16,9)--(25,9);
\draw (17,10)--(25,10);
\draw (18,11)--(25,11);
\draw (14,7)--(14,6);
\draw (15,8)--(15,6);
\draw (16,9)--(16,6);
\draw (17,10)--(17,6);
\draw (18,11)--(18,6);
\draw (19,12)--(19,6);
\draw (20,12)--(20,6);
\draw (21,12)--(21,6);
\draw (22,12)--(22,6);
\draw (23,12)--(23,6);
\draw (24,12)--(24,6);
\draw[trans] (0.6,0.4)--(6.6,6.4);
\draw[trans] (7,6.4)--(13,6.4);
\draw[trans] (12.6,6.4)--(18.6,12.4);
\draw[trans] (19,12.4)--(25,12.4);

\vertex[](1,-5) at (1,-5) {};
\vertex[](2,-4) at (2,-4) {};
\vertex[](2,-5) at (2,-5) {};
\vertex[](3,-4) at (3,-4) {};
\vertex[](3,-5) at (3,-5) {};
\vertex[](3,-6) at (3,-6) {};
\vertex[](4,-3) at (4,-3) {};
\vertex[](4,-4) at (4,-4) {};
\vertex[](4,-5) at (4,-5) {};
\vertex[](4,-6) at (4,-6) {};
\vertex[](5,-6) at (5,-6) {};
\vertex[](5,-3) at (5,-3) {};
\vertex[](5,-4) at (5,-4) {};

\draw[green] (2,-5)--(4,-5)--(4,-4)--(5,-4);
\draw[red] (1,-5)--(2,-5)--(3,-6)--(5,-6);
\draw (1,-5)--(2,-4)--(3,-4)--(3,-6);
\draw (2,-4)--(2,-5);
\draw(4,-3)--(3,-4)--(4,-4)--(4,-3);
\draw(4,-6)--(4,-5)--(5,-6);
\draw(4,-3)--(5,-3)--(5,-4);
\draw(3,-5)--(4,-6);

\end{tikzpicture}
\caption{Proof that $s(G) \geq 4k(G)$. The red path $x_0,x_1,...x_{4k}$ is a diameter of length $4k$ and at distance $4k$ of $z$; while the green path $v_0,v_1,...v_{4k}$ is a shortest path (another diameter indeed) of eccentricity $k$. The large graph is $J_6$ (using the graph sequence $(J_k)_k$ from Theorem~\ref{propositionklaminarMESP}) and the small one on the bottom left is $J_1$. The other members of the sequence car easily be derived.  }\label{fig:tightness4}. 
\end{figure}
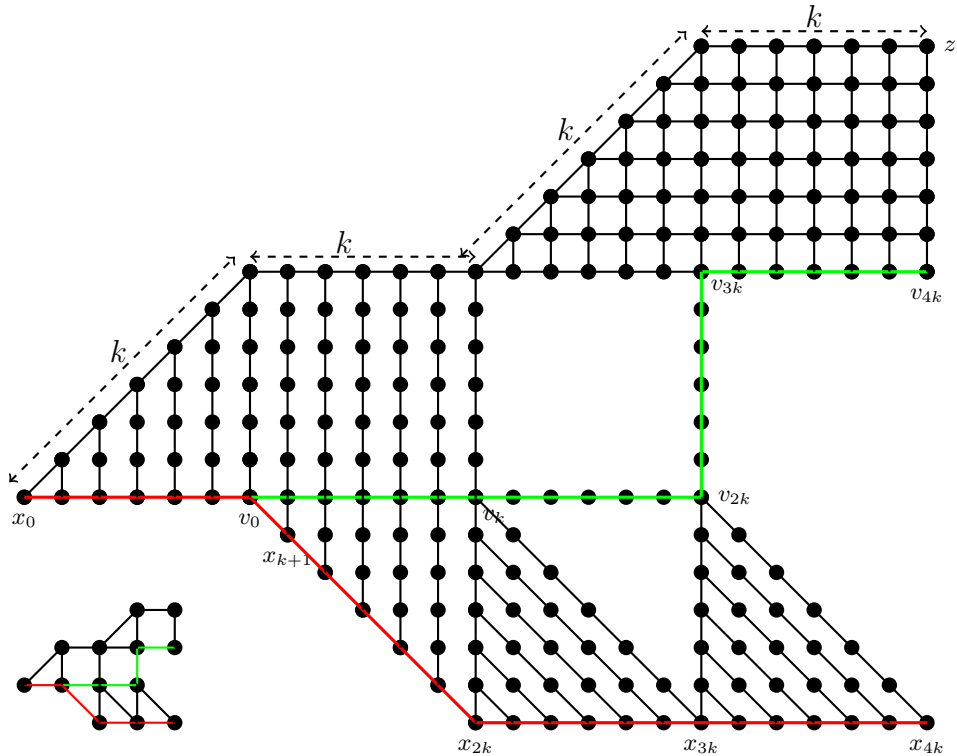

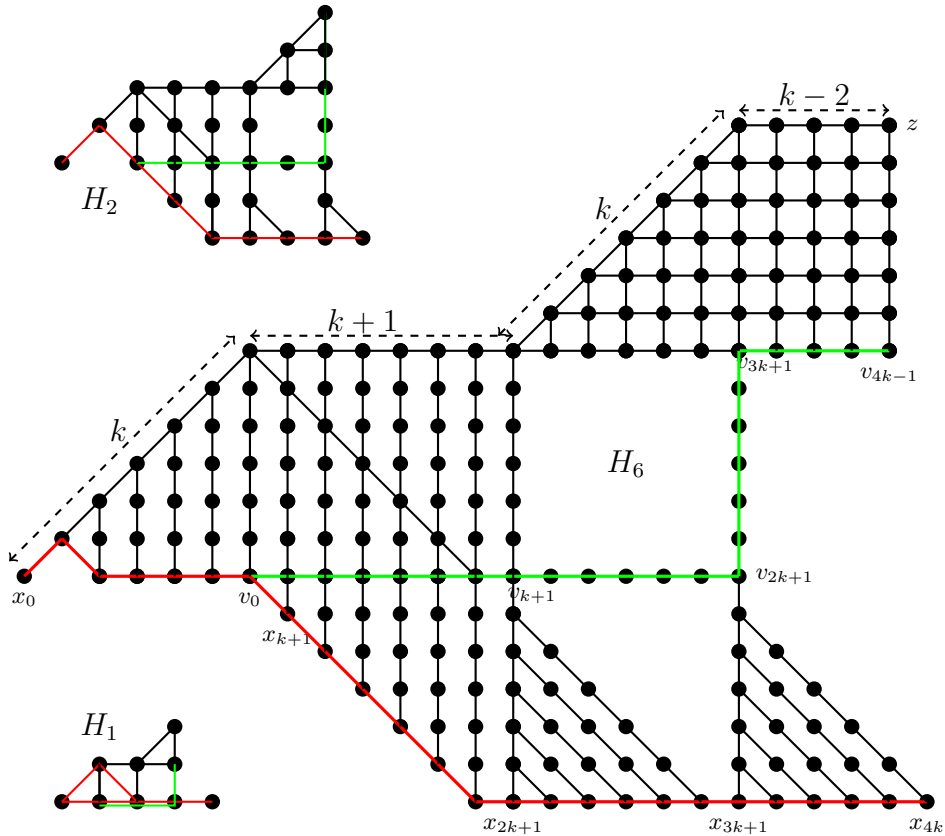
\begin{figure}[b!]
\centering
\begin{tikzpicture}[thick,scale=0.5,trans/.style={thick,<->,dashed}]

\vertex[label=below:$x_0$](0,0) at (0,0) {};
\vertex[](2,0) at (2,0) {};
\vertex[](3,0) at (3,0) {};
\vertex[](4,0) at (4,0) {};
\vertex[](5,0) at (5,0) {};
\vertex[label=below:$v_0$](6,0) at (6,0) {};
\vertex[](7,0) at (7,0) {};
\vertex[](8,0) at (8,0) {};
\vertex[](9,0) at (9,0) {};
\vertex[](10,0) at (10,0) {};
\vertex[](11,0) at (11,0) {};
\vertex[](12,0) at (12,0) {};
\vertex[](13,0) at (13,0) {};
\vertex[](14,0) at (14,0) {};
\vertex[](15,0) at (15,0) {};
\vertex[](16,0) at (16,0) {};
\vertex[](17,0) at (17,0) {};
\vertex[](18,0) at (18,0) {};
\vertex[label=right:$v_{2k+1}$](19,0) at (19,0) {};
\vertex[](1,1) at (1,1) {};
\vertex[](2,2) at (2,2) {};
\vertex[](3,3) at (3,3) {};
\vertex[](4,4) at (4,4) {};
\vertex[](5,5) at (5,5) {};
\vertex[](6,6) at (6,6) {};
\vertex[](7,6) at (7,6) {};
\vertex[](8,6) at (8,6) {};
\vertex[](9,6) at (9,6) {};
\vertex[](10,6) at (10,6) {};
\vertex[](11,6) at (11,6) {};
\vertex[](12,6) at (12,6) {};
\vertex[](13,6) at (13,6) {};
\vertex[](14,6) at (14,6) {};
\vertex[](15,6) at (15,6) {};
\vertex[](16,6) at (16,6) {};
\vertex[](17,6) at (17,6) {};
\vertex[](18,6) at (18,6) {};
\vertex[](19,6) at (19,6) {};
\vertex[](20,6) at (20,6) {};
\vertex[](21,6) at (21,6) {};
\vertex[](22,6) at (22,6) {};
\vertex[label=below:$v_{4k-1}$](23,6) at (23,6) {};
\vertex[](14,7) at (14,7) {};
\vertex[](15,8) at (15,8) {};
\vertex[](16,9) at (16,9) {};
\vertex[](17,10) at (17,10) {};
\vertex[](18,11) at (18,11) {};
\vertex[](19,12) at (19,12) {};
\vertex[](20,12) at (20,12) {};
\vertex[](21,12) at (21,12) {};
\vertex[](22,12) at (22,12) {};
\vertex[label=right:$z$](23,12) at (23,12) {};
\vertex[](23,11) at (23,11) {};
\vertex[](23,10) at (23,10) {};
\vertex[](23,9) at (23,9) {};
\vertex[](23,8) at (23,8) {};
\vertex[](23,7) at (23,7) {};
\vertex[](19,1) at (19,1) {};
\vertex[](19,5) at (19,5) {};
\vertex[](19,4) at (19,4) {};
\vertex[](19,3) at (19,3) {};
\vertex[](19,2) at (19,2) {};
\vertex[label=below:$x_{k+1}$](7,-1) at (7,-1) {};
\vertex[](8,-2) at (8,-2) {};
\vertex[](9,-3) at (9,-3) {};
\vertex[](10,-4) at (10,-4) {};
\vertex[](11,-5) at (11,-5) {};
\vertex[](12,-6) at (12,-6) {};
\vertex[label=below:$x_{2k+1}$](13,-6) at (13,-6) {};
\vertex[](14,-6) at (14,-6) {};
\vertex[](15,-6) at (15,-6) {};
\vertex[](16,-6) at (16,-6) {};
\vertex[](17,-6) at (17,-6) {};
\vertex[](18,-6) at (18,-6) {};
\vertex[label=below:$x_{3k+1}$](19,-6) at (19,-6) {};
\vertex[](20,-6) at (20,-6) {};
\vertex[](21,-6) at (21,-6) {};
\vertex[](22,-6) at (22,-6) {};
\vertex[](23,-6) at (23,-6) {};
\vertex[label=below:$x_{4k}$](24,-6) at (24,-6) {};
\node (p6) at ( 13.5,-0.5) {$v_{k+1}$};
\node (p6) at ( 19.7,5.6) {$v_{3k+1}$};
\node (p6) at ( 2.5,3.9) {\large{$k$}};
\node (p6) at ( 9,6.8) {\large{$k+1$}};
\node (p6) at ( 15.4,9.8) {\large{$k$}};
\node (p6) at ( 21,12.8) {\large{$k-2$}};

\foreach \x in {0,...,6}
    \foreach \y in {0,...,6} 
    {
     \vertex[](\x\y) at (\x+7,\y) {};

}

\foreach \x in {0,...,4}
    \foreach \y in {0,...,6} 
    {
     \vertex[](\x\y) at (\x+19,\y+6) {};
}

\foreach \x in {0,...,6}
    \foreach \y in {0,...,\x} 
    {
     \vertex[](\x\y) at (\x+7, - \y) {};
     \vertex[](\x\y) at (\x+13, \y+6) {};

}

\foreach \x in {0,...,5}
    { \vertex[](\x\x) at (\x+2,\x+1 ) {};
}

\foreach \x in {0,...,5}
    \foreach \y in {0,...,\x} 
    {
     \vertex[](\x\y) at (\x+2,\y ) {};
}

\foreach \x in {0,...,4}
    \foreach \y in {\x,...,4} 
    {
     \vertex[](\x\y) at (\x+19, -1 - \y ) {};
     \vertex[](\x\y) at (\x+13, -1 - \y  ) {};
}

\draw(6,6)--(12,0);
\draw (1,1)--(6,6) ;
\draw (6,6)--(19,6);
\draw[very thick,green] (19,6)--(23,6);
\draw (13,6)--(19,12);
\draw (23,12)--(19,12);
\draw (23,12)--(23,6);
\draw[very thick,green] (19,0)--(19,6);
\draw (19,-6)--(19,0);
\draw[very thick,green] (6,0)--(19,0);
\draw[very thick,red] (0,0)--(1,1);
\draw[very thick,red] (1,1)--(2,0);
\draw[very thick,red] (2,0)--(6,0);

\draw[very thick,red] (6,0)--(12,-6);
\draw[very thick,red] (24,-6)--(12,-6);
\draw (13,-6)--(13,6);
\draw (12,-6)--(12,6);
\draw (11,-5)--(11,6);
\draw (10,-4)--(10,6);
\draw (9,-3)--(9,6);
\draw (8,-2)--(8,6);
\draw (7,-1)--(7,6);
\draw (6,-0)--(6,6);
\draw (5,0)--(5,5);
\draw (4,0)--(4,4);
\draw (3,0)--(3,3);
\draw (2,0)--(2,2);
\draw (13,-1)--(18,-6);
\draw (13,-2)--(17,-6);
\draw (13,-3)--(16,-6);
\draw (13,-4)--(15,-6);
\draw (13,-5)--(14,-6);
\draw (19,-1)--(24,-6);
\draw (19,-2)--(23,-6);
\draw (19,-3)--(22,-6);
\draw (19,-4)--(21,-6);
\draw (19,-5)--(20,-6);
\draw (14,7)--(23,7);
\draw (15,8)--(23,8);
\draw (16,9)--(23,9);
\draw (17,10)--(23,10);
\draw (18,11)--(23,11);
\draw (14,7)--(14,6);
\draw (15,8)--(15,6);
\draw (16,9)--(16,6);
\draw (17,10)--(17,6);
\draw (18,11)--(18,6);
\draw (19,12)--(19,6);
\draw (20,12)--(20,6);
\draw (21,12)--(21,6);
\draw (22,12)--(22,6);
\draw[trans] (-0.4,0.4)--(5.6,6.4);
\draw[trans] (6,6.4)--(13,6.4);
\draw[trans] (12.6,6.4)--(18.6,12.4);
\draw[trans] (19,12.4)--(23,12.4);
\node (lab6) at ( 16,3) {\large{$H_6$}};
\node (lab2) at ( 2,-4) {\large{$H_1$}};
\node (lab1) at ( 2,10) {\large{$H_2$}};

\vertex[](1,-6) at (1,-6) {};
\vertex[](2,-6) at (2,-6) {};
\vertex[](3,-6) at (3,-6) {};
\vertex[](4,-6) at (4,-6) {};
\vertex[](5,-6) at (5,-6) {};
\vertex[](2,-5) at (2,-5) {};
\vertex[](3,-5) at (3,-5) {};
\vertex[](4,-5) at (4,-5) {};
\vertex[](4,-4) at (4,-4) {};

\draw (2,-6)--(2,-5)--(4,-5);
\draw[red] (1,-6)--(5,-6);
\draw[red] (1,-6)--(2,-5)--(3,-6);
\draw (4,-5)--(4,-4)--(3,-5);
\draw(2,-6)--(2,-5)  (3,-6)--(3,-5);
\draw[green](2,-6.1)--(4,-6.1)--(4,-5);

\vertex[](1,11) at (1,11) {};
\vertex[](2,12) at (2,12) {};
\vertex[](3,13) at (3,13) {};
\vertex[](3,12) at (3,12) {};
\vertex[](3,11) at (3,11) {};
\vertex[](4,10) at (4,10) {};
\vertex[](4,11) at (4,11) {};
\vertex[](4,12) at (4,12) {};
\vertex[](4,13) at (4,13) {};
\vertex[](5,9) at (5,9) {};
\vertex[](5,10) at (5,10) {};
\vertex[](5,11) at (5,11) {};
\vertex[](5,12) at (5,12) {};
\vertex[](5,13) at (5,13) {};
\vertex[](6,9) at (6,9) {};
\vertex[](6,10) at (6,10) {};
\vertex[](6,11) at (6,11) {};
\vertex[](6,12) at (6,12) {};
\vertex[](6,13) at (6,13) {};
\vertex[](7,9) at (7,9) {};
\vertex[](7,11) at (7,11) {};
\vertex[](7,14) at (7,14) {};
\vertex[](7,13) at (7,13) {};
\vertex[](8,9) at (8,9) {};
\vertex[](8,10) at (8,10) {};
\vertex[](8,11) at (8,11) {};
\vertex[](8,12) at (8,12) {};
\vertex[](8,13) at (8,13) {};
\vertex[](8,14) at (8,14) {};
\vertex[](8,15) at (8,15) {};
\vertex[](9,9) at (9,9) {};
\draw[red] (1,11)--(2,12)--(5,9)--(9,9);
\draw[green](3,11)--(8,11)--(8,15);
\draw(2,12)--(3,13)--(5,11)--(5,9);
\draw(3,11)--(3,13)--(6,13)--(8,15)--(8,13)--(7,13)--(7,14);
\draw (4,10)--(4,13) (5,13)--(5,9);
\draw (7,13)--(6,13)--(6,9);
\draw (6,10)--(7,9) (8,10)--(9,9) (7,14)--(8,14);
\draw (8,11)--(8,9);

\end{tikzpicture}
\caption{Proof that $l(G) \geq 4k(G)-2$. It is a graph sequence $(H_k)_k$, using the notation from Theorem~\ref{propositionklaminarMESP}. For $k\ge 2$, the red path $x_0,x_1,...x_{4k}$ is the unique diameter. Its length is $4k$ and it is at distance $4k-2$ of $z$. The green path $v_0,v_1,...v_{4k-1}$ is a shortest path of length $4k-1$ and of eccentricity $k$. Graphs $H_2$ and $H_6$ are drawn but all graphs $H_k$, $k\ge 2$ can be derived from the pattern of $H_6$. The small graph on the bottom left is the special case $H_1$ who do not follow this pattern. It admits exactly two diameters, both of eccentricity 2 (red), and a shortest path of eccentricity 1 (green). }\label{fig:tightnessStrong}.
\end{figure}
\section{Conclusion}

We have investigated the Minimum Eccentricity Shortest Path problem for general graphs and proposed a linear time algorithm computing a $3$-approximation. The algorithm is a 2-recursive function with constant recursivity depth, launching two BFSs each time, thus taking linear time. Additionally, we've established some tight bounds linking the MESP parameter $k(G)$ and the k-laminarity parameters $s(G)$ and $l(G)$.

\bigskip
On improving the current approximation algorithms, the following remark should be noted. Our algorithm is confined in finding a good pair of vertices in the graph, and the shortest path between them is then picked arbitrarily. By doing so, we are unlikely to get a better result than a $3$-approximation. Indeed as shown by \cite{Dragan2015} there exist graphs for which the MESP solution is a path of eccentricity $k$ between two vertices $s$ and $t$ such that some other shortest paths between $s$ and $t$ have an eccentricity of exactly $3k$. 

\bigskip
About laminarity parameters, computing $l(G)$ is NP-complete, while computing  $s(G)$ can be done in $O(	n^2m\log n)$ time~\cite{Volkel2016}. It may be interesting to design an approximation algorithm, \emph{i.e} producing a diameter of eccentricity at most $\alpha s(G)$ or $\beta l(G)$. Linear-time algorithms like BFS cannot be used however, since we do not know how to compute $diam(G)$  faster than a  matrix product, and even surlinear approximation are studied~\cite{aingworthDiam}. Different techniques than the ones used here must therefore be employed.

\bibliographystyle{splncs03}
\bibliography{references}

\end{document}